\newtheorem{proposition}{Proposition}
\newtheorem{lemma}{Lemma}
\newenvironment{proof}{\noindent{\bf Proof:}}{\hfill\fbox{}\vspace*{1mm}}
\begin{document}
\title{\bf On Reduced Form Intensity-based Model with Trigger Events}
\author{Jia-Wen Gu
\thanks{Advanced Modeling and Applied Computing Laboratory,
Department of Mathematics, The University of Hong Kong,
Pokfulam Road, Hong Kong. Email:jwgu.hku@gmail.com.
}
\and Wai-Ki Ching
\thanks{Corresponding author. Advanced Modeling and Applied Computing Laboratory,
Department of Mathematics, The University of Hong Kong, Pokfulam
Road, Hong Kong. E-mail: wching@hku.hk. Research supported in
part by RGC Grants 7017/07P, HKU CRCG Grants
and HKU Strategic Research Theme Fund on Computational Physics and Numerical Methods.}
\and Tak-Kuen Siu
\thanks{ Department of Applied Finance and Actuarial Studies,
Faculty of Business and Economics, Macquarie University,
Macquarie University, Sydney, NSW 2109, Australia. Email: ken.siu@mq.edu.au,
ktksiu2005@gmail.com,}
\and Harry Zheng
\thanks{ Department of Mathematics,
Imperial College, London, SW7 2AZ, UK. Email: h.zheng@imperial.ac.uk.}
}
\date{}
\maketitle

Corporate defaults may be triggered by some major market news or events such as financial crises or collapses of
major banks or financial institutions.
With a view to develop a more realistic model for credit risk analysis,
we introduce a new type of reduced-form intensity-based model
that can incorporate the impacts of both observable ``trigger'' events and economic environment on corporate defaults.
The key idea of the model is to augment a Cox process with trigger events. Both  single-default and multiple-default cases are considered in this paper.
In the former case, a simple expression for the distribution of the default time is obtained.
Applications of the proposed model
to price defaultable bonds and multi-name Credit Default Swaps (CDSs) are provided. \\

\noindent
{\bf Keywords:} Reduced-Form Models; Trigger Events; Multiple Defaults;
 Cox Process; Defaultable Bonds; Basket Credit Default Swaps.

\newpage
\section{Introduction}

Modeling default risk has long been an important problem in
both theory and practice of banking and finance.
In the aftermath of the global financial crisis (GFC), much attention has been paid to investigating the appropriateness of the current practice of default risk modeling
in banking, finance and insurance industries.
Popular credit risk models currently used in the industries have their origins in two major classes of
models. The first class of models was pioneered by Black and Scholes (1973) and Merton (1974) and is called a structural firm value model.
The basic idea of the model is to describe explicitly the relationship between the asset value of a firm and the default of the firm.
More specifically, the default of the firm is triggered by the event that the asset value of the firm falls below a certain threshold
level related to the liabilities of the firm.
The structural firm value model provides the theoretical basis for the commercial KMV model
which has been widely used for default risk model in the financial industry.
The second class of models was developed by Jarrow and Turnbull (1995) and Madan and Unal
(1998) and is called a reduced-form, intensity-based credit risk model.
The basic idea of the model is to consider defaults as exogenous events and to model
their occurrences using Poisson processes and their variants.
In this paper, we focus on reduced-form, intensity-based credit risk models.

Reduced-form, intensity-based credit risk models have been widely used to model
portfolio credit risk and to describe dependent default risks. There are two major types of reduced-form, intensity-based models
for describing dependent default risk, namely bottom-up models and top-down models.
Bottom-up models focus on modeling default intensities of
individual reference entities and their aggregation to form
a portfolio default intensity.  Some works on bottom-up models
include Duffie and Garleanu (2001), Jarrow and Yu (2001),
Sch\"onbucher and Schubert (2001), Giesecke and Goldberg (2004),
Duffie, Saita and Wang (2006) and Yu (2007) etc.
 These works differ mainly in their specifications
for the parametric forms of default intensities of individual
entities and the way these intensities are aggregated.
The top-down models concern modeling the occurence
defaults at a portfolio level. A default intensity for the whole portfolio
is modeled without reference to the identities of individual entities.
Some procedures such as random thinning can be used to recover the default
intensities of the individual entities.
Some works on top-down models include Davis and Lo (2001),
Giesecke, Goldberg and Ding(2011),
Brigo, Pallavicini and Torresetti (2006),
Longstaff and Rajan (2008) and Cont and Minca (2011).

We focus on the bottom-up model. Lando (1988) proposed a reduced-form, intensity-based model, where the occurrence of a default is described by the first jump of a Cox process.
The main advantage of the Lando's model is that under his model,
a simple pricing formula for a defaultable risky asset can be obtained.
This formula is similar to the one for the default-free counterpart of the risky asset.
Yu (2007) extended the Lando's model to incorporate multiple defaults and their correlation. The so-called ``total hazard construction''
by Norros (1986) and Shaked and Shathanthikumar (1987)
was used to generate default times with interacting intensities.
Zheng \& Jiang (2009) proposed a unified factor-contagion model for modeling
correlated defaults and provide an analytical solution for modeling default
times with ``total hazard construction''.
Gu et al. (2011) introduced an ``ordered default rate'' method to give a recursive formula
for the distribution of default times in pricing basket Credit Default Swaps (CDSs) in the context of a reduced-form, intensity-based model, which significantly enhances the computational efficiency in finding the prices of CDSs.
One of the shortcomings in a number of existing reduced-form intensity-based models is that they fail to incorporate the impact of major market events, such as financial crises, on corporate defaults.
This may lead to underestimation of default risk and also undervaluation
of defaultable risky products.

In this paper, we address the problem of how to incorporate the direct impact of observable ``trigger'' events
on corporate defaults in a reduced-form, intensity-based credit risk model.
The key idea is to describe the occurrence
of a default by the first unrecoverable ``trigger'' event. Armed with a Cox process for default risk, we incorporate
the impact of economic environment on defaults by allowing the default intensity depending on an underlying
state process representing the variation of economic environment over time. We consider both the single-default
and multiple-default cases. In the single-default case, we obtain a simple expression for the distribution of the
default time. This distribution is useful for pricing defaultable securities. We then extend the model to the multiple-default
case with a view to incorporating default correlation. To provide a tractable and practical way to value defaultable
securities, we focus on the case where the state process for economic environment is modeled by a continuous-time,
finite-state, observable Markov chain. Applications of the proposed model to value defaultable bonds and
basket Credit Default Swaps (CDSs) are discussed. We also provide numerical results to illustrate the sensitivity of the
prices of these securities with respect to changes in key parameters.

The rest of the paper is organized as follows.
Section 2 presents the basic model and derives the distribution of a default time.
Section 3 provides the extension of model framework
to the case of multiple correlated defaults.
Section 4 presents the Markov chain model for the state process of
economic environment.  Applications to pricing the defaultable
securities are given in Section 5.
We then  conclude the paper in Section 6.

\section{The Basic Model}

A popular reduced-form, intensity-based credit risk model was proposed by Lando
(1998), where the occurrence of a default was described by the first jump of a Cox process
with stochastic intensities $\{ \lambda_t \}_{t \geq 0}$
depending on an underlying state process $\{ X_t  \}_{t \geq 0}$
describing the evolution of economic environment over time.
Here we aim at extending the Lando's model by incorporating explicitly the
``trigger'' events such as financial crises or extra-ordinary market news into default
risk modeling. We assume that these ``trigger'' events are observable and may lead
to the default of a corporation. Furthermore, we suppose that the corporation may recover from
a ``trigger'' event via re-organizing its resources or re-structuring.
We assume that the emergence of ``trigger'' events are modeled by a Cox process,
which is also called doubly Poisson process in the statistical literature and has a remarkable
history in statistics. In what follows, we shall describe the mathematical set up of the
Cox process describing the ``trigger'' events.

Uncertainty is described by a complete filtered probability space
$(\Omega, {\cal F}, \{ {\cal F}_t \}_{t \geq 0}, P)$, where $P$ is a given probability measure
\footnote{ When we wish to evaluate the risk of a credit portfolio, we need to use a
real-world probability measure.
In this case, $P$ can be interpreted as a real-world probability measure.
On the other hand,  when we wish to price defaultable securities, we must use a risk-neutral probability measure.
In this case, there are two approaches to interpret $P$.
The first approach is to interpret $P$ as a risk-neutral probability
measure and start with the risk-neutral probability measure directly. The second approach
is to interpret $P$ as a real-world probability measure and then use a measure change
for Poisson processes to transform the real-world probability measure to a risk-neutral
one.
To simplify our discussion, when we discuss the pricing of defaultable securities,
we shall adopt the first approach.} and $\{ {\cal F}_t \}_{t \geq 0}$  is a filtration
satisfying some usual conditions, (i.e., the right-continuity and the $P$-completeness).
We shall define precisely the filtration $\{ {\cal F}_t \}_{t \geq 0}$ in later part of
this section.

To describe the evolution of the state of economic environment
over time, we define a state process $\{ X_t \}_{t \geq 0}$. 
We assume that
$\{ X_t \}_{t \geq 0}$ is a c\`adl\`ag, (i.e., right continuous with left limits), process on
$(\Omega, {\cal F}, \{ {\cal F}_t \}_{t \geq 0}, P)$ with state space $\Re$.
Let $\{ N_t \}_{ t \geq 0 }$ be a standard Poisson process on
$(\Omega, {\cal F}, \{ {\cal F}_t \}_{t \geq 0}, P)$, with $N_0 = 0$, $P$-a.s.
Write $\{ \lambda_t \}_{t \geq 0}$ for a bounded, non-negative stochastic
process on  $(\Omega, {\cal F}, \{ {\cal F}_t \}_{t \geq 0}, P)$.
We assume that for each $t \geq 0$, $\lambda_t :=\lambda (X_t)$,
for some non-negative continuous function $\lambda$
and that $\{ N_t \}_{t \geq 0}$ and $\{ \lambda_t \}_{t \geq 0}$
are stochastically independent under $P$. 
For each $t \geq 0$, we define the following cumulative process $\{ \Lambda_t \}_{t \geq 0}$:
\begin{eqnarray*}
\Lambda_t := \int^{t}_{0} \lambda_s d s < \infty \ .
\end{eqnarray*}
Then a Cox (point) process $\{ \tilde{N}_t \}_{t \geq 0}$ with intensity measure $\Lambda := \{ \Lambda_t \}_{t \geq 0}$
\footnote{ Strictly speaking, the intensity measure $\Lambda$ is defined on the $\sigma$-field generated by
bounded subsets of the interval $[0, \infty)$. The intensity measure $\Lambda$ also has a Random density, or Radon-Nikodym
derivative, $\lambda$, by definition.} is defined by:
\begin{eqnarray*}
{\tilde N}_t := N _{\Lambda_t} \ .
\end{eqnarray*}
For each $i = 1, 2, \cdots$, let $\tau^i$ be the arrival time of the $i^{th}$ ``trigger'' event, which is modeled as the arrival time of a jump in the Cox process.
Once a ``trigger'' event occurs at time $s$, a loss occurs to the firm , which is modeled as an arbitrary independent random variable ``$L$ ''.
We write $\{ C_t \}_{t \geq 0}$ for the process depending
on the state process $\{ X_t  \}_{t \geq 0}$,
where $C_t := C (X_t)$ as the threshold value at time $t$, for some non-negative continuous function $C$.  If $L \leq C_s$, then the firm can recover from the ``trigger'' event; otherwise, the firm defaults.
 Davis \& Lo (2001) introduce a Bernoulli contagion variable in a homogeneous setup, which is similar to  variable ``$L$'' Here.

Let $\{ \tau^i  \}_{i = 1, 2, \cdots}$ be
a sequence of stopping times representing the arrival times of ``trigger'' events defined by
\begin{eqnarray*}
\tau^i := \inf \{ t \geq 0 : \tilde{N}_t \geq i \} \ ,
\end{eqnarray*}
and $\{ L^i \}_{i = 1, 2, \cdots}$ a sequence of arbitrary independent and identically distributed random variables. 
We assume that $\{ \tau^i  \}_{i = 1, 2, \cdots}$ and $\{ L^i \}_{i = 1, 2, \cdots}$ are stochastically independent under $P$.
Define a random variable $K$ taking values in $\{ 1, 2, \cdots \}$ by
\begin{eqnarray*}
K := \min \{ i : L^i > C_{\tau^i} \} \ .
\end{eqnarray*}
Then the default time of a firm $\tau$ is defined by: $\tau := \tau^K$.
We now specify more explicitly the information structure of our model.
Define the filtrations $\{ \mathcal{G}_t \}_{t \geq 0}$, $\{ \mathcal{H}_t \}_{t \geq 0}$ and
$\{ \mathcal{I}_t \}_{t \geq 0}$ as follow:
$$
\mathcal{G}_t := \sigma \{ X_s : 0 \leq s\leq t \} \vee {\cal N},
$$
$$
\mathcal{H}_t:= \sigma \{ \tilde{N}_s: 0\leq s \leq t \} \vee {\cal N} \ ,
$$
and
$$
\mathcal{I}_t := \sigma \{ 1_{\{\tau \leq s\}} : 0 \leq s\leq t \} \vee {\cal N},
$$
where ${\cal N}$ is the collection of all $P$-null subsets in ${\cal F}$ and
$1_A$ is the indicator function of an event $A$.
Here we assume that the filtration $\{ {\mathcal{F}}_t \}_{t \geq 0}$
is specified as follows:
$\mathcal{F}_t := \mathcal{G}_t \vee \mathcal{H}_t \vee \mathcal{I}_t$.
This represents the full observable information structure in our model.

For each $s \geq 0$, let $q_s$ be the probability that the firm can recover from a ``trigger'' event if the event occurs at time $s$ given the underlying state $X_s$.
Then,
\begin{eqnarray*}
q_s := P (L \leq C(X_s) \mid X_s) .
\end{eqnarray*}
Let $p_s := 1 - q_s$. Then
\begin{eqnarray*}
p_s = P (L > C(X_s) \mid X_s)  .
\end{eqnarray*}
The following result is one of our main results which gives the conditional and unconditional distributions of the default time $\tau$.
The proof can be found the Appendix.

\begin{proposition} For any $t \geq s > 0$,
$$
P(\tau >s \mid \mathcal{G}_t) = \exp\left\{-\int_0^s p_u \lambda_u du\right\}
$$
and
$$
P(\tau >s) = E\left[\exp\left\{-\int_0^s p_u \lambda_u du\right\}\right] \ ,
$$
where $E$ is an expectation under $P$.
\end{proposition}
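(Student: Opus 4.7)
The plan is to condition on $\mathcal{G}_t$ and exploit two standard conditioning properties of the Cox process. First, since $\lambda_u=\lambda(X_u)$ and $C_u=C(X_u)$ are $\mathcal{G}_t$-measurable for all $u\leq s\leq t$, conditioning on $\mathcal{G}_t$ fixes the entire intensity path $\{\lambda_u\}_{u\leq s}$, so $\tilde N$ restricted to $[0,s]$ is an inhomogeneous Poisson process of deterministic rate. Second, given $\tilde N_s=n$, the unordered trigger times are i.i.d.\ with density $\lambda_u/\Lambda_s$ on $[0,s]$; equivalently, the ordered times $\tau^1<\cdots<\tau^n$ have joint density $n!\prod_i\lambda_{u_i}/\Lambda_s^n$ on $\{0<u_1<\cdots<u_n<s\}$. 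Under the same conditioning, $q_u$ and $p_u=1-q_u$ become deterministic functions of $u$.

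The key decomposition is
$$\{\tau>s\}=\bigcap_{i:\tau^i\leq s}\{L^i\leq C_{\tau^i}\},$$
so partitioning on the value of $\tilde N_s$ gives
$$P(\tau>s\mid\mathcal{G}_t)=\sum_{n=0}^{\infty}P(\tilde N_s=n\mid\mathcal{G}_t)\,P\bigl(\text{all $n$ triggers recovered}\mid\mathcal{G}_t,\tilde N_s=n\bigr).$$
Using the order-statistics property together with the assumed independence of $\{L^i\}$ from both $\{\tau^i\}$ and $X$, the inner probability equals
$$\int_{0<u_1<\cdots<u_n<s}\frac{n!\prod_{i=1}^n\lambda_{u_i}}{\Lambda_s^n}\prod_{i=1}^n q_{u_i}\,du_1\cdots du_n=\left(\frac{1}{\Lambda_s}\int_0^s q_u\lambda_u\,du\right)^n.$$

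Combining this with $P(\tilde N_s=n\mid\mathcal{G}_t)=e^{-\Lambda_s}\Lambda_s^n/n!$ and summing the resulting Taylor series yields
$$P(\tau>s\mid\mathcal{G}_t)=e^{-\Lambda_s}\exp\!\left(\int_0^s q_u\lambda_u\,du\right)=\exp\!\left(-\int_0^s p_u\lambda_u\,du\right),$$
which is the first identity. The second follows immediately by the tower property, taking $E[\,\cdot\,]$ of both sides.

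The main subtlety is justifying that after conditioning on $\mathcal{G}_t$ and on the realised trigger times $\tau^1,\ldots,\tau^n$, the losses $L^1,\ldots,L^n$ remain independent with their original marginal distributions, so that the recovery probabilities multiply to $\prod_i q_{\tau^i}$. This rests on the assumed mutual independence of $\{L^i\}$, $\{\tau^i\}$ and $\{X_t\}$ declared earlier in the section; once it is in place, the remainder is routine bookkeeping with the Cox process.
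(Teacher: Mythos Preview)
Your proof is correct but follows a genuinely different route from the paper's argument. You condition on $\tilde N_s=n$ and invoke the order-statistics property of the (conditionally inhomogeneous) Poisson process, so that the survival probability factorises as $\prod_i q_{\tau^i}$ and averages to $\bigl(\Lambda_s^{-1}\int_0^s q_u\lambda_u\,du\bigr)^n$; summing the Poisson series then yields the thinning identity $e^{-\Lambda_s}\exp\bigl(\int_0^s q_u\lambda_u\,du\bigr)=\exp\bigl(-\int_0^s p_u\lambda_u\,du\bigr)$. The paper instead decomposes $\{\tau\le s\}$ according to the index $K=i$ of the fatal trigger, writes $P(\tau\le s,K=i\mid\mathcal G_t)=E[\,1_{\{\tau^i\le s\}}p_{\tau^i}q_{\tau^{i-1}}\cdots q_{\tau^1}\mid\mathcal G_t\,]$, expresses this via the joint density of $(\tau^1,\dots,\tau^i)$, and shows the resulting sum $H_u=\sum_i Z^i_u$ satisfies the ODE $H'_u=(H_u+1)\lambda_u q_u$ with $H_0=0$, whose solution $H_u=\exp\bigl(\int_0^u \lambda_v q_v\,dv\bigr)-1$ gives the same answer. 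Your approach is shorter and makes the connection to Poisson thinning transparent; the paper's ODE route is more computational but entirely self-contained, not relying on the order-statistics representation.
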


The following results are also important to
characterize the probability laws of the default time $\tau$.
Their proofs can be found in the Appendix.

\begin{lemma}
For any $s<t$,
$$
E(1_{\{ \tau > t \}} \mid \mathcal{G}_t \vee \mathcal{H}_s \vee \mathcal{I}_s ) = 1_{\{ \tau >s\}}\exp\left\{-\int_s^t p_u \lambda_u du\right\}.
$$
\end{lemma}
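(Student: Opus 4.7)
The plan is to reduce the lemma to Proposition 1 applied on the interval $(s,t]$ by exploiting the independence structure of the Cox process and the i.i.d.\ losses. First I would observe that $\{\tau > t\}\subseteq\{\tau > s\}$ whenever $s<t$, so $1_{\{\tau > t\}} = 1_{\{\tau > s\}}\,1_{\{\tau > t\}}$. Since $1_{\{\tau > s\}}$ is $\mathcal{I}_s$-measurable, the pull-out property of conditional expectation gives
$$
E\bigl(1_{\{\tau > t\}}\mid \mathcal{G}_t\vee\mathcal{H}_s\vee\mathcal{I}_s\bigr) = 1_{\{\tau > s\}}\,E\bigl(1_{\{\tau > t\}}\mid \mathcal{G}_t\vee\mathcal{H}_s\vee\mathcal{I}_s\bigr),
$$
so it suffices to show that on the event $\{\tau>s\}$ the remaining conditional expectation equals $\exp\{-\int_s^t p_u\lambda_u\,du\}$.

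Next I would exploit independence to strip away $\mathcal{H}_s$ and $\mathcal{I}_s$. On $\{\tau>s\}$, the occurrence of $\{\tau>t\}$ depends only on the trigger arrivals in $(s,t]$ and the losses attached to them: namely, that no pair $(\tau^i, L^i)$ with $\tau^i\in(s,t]$ satisfies $L^i>C_{\tau^i}$. Given $\mathcal{G}_t$, which determines the sample path of $\lambda$ on $[0,t]$, the Cox-process increment $\tilde N_t-\tilde N_s$ is Poisson with parameter $\int_s^t\lambda_u\,du$ and is independent of $\mathcal{H}_s$ by the conditional independent-increments property of Cox processes. The losses $\{L^i\}$ are i.i.d.\ and independent of $\{\tau^i\}$ and of $\{X_t\}$, so the losses attached to new trigger events in $(s,t]$ are independent of both $\mathcal{H}_s$ and the losses attached to earlier triggers (which are precisely what $\mathcal{I}_s$ encodes beyond $\mathcal{H}_s$). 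Consequently the conditional expectation reduces on $\{\tau>s\}$ to one taken only with respect to $\mathcal{G}_t$.

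Finally, I would apply the same thinning/marking argument that underlies Proposition 1, but now on the interval $(s,t]$ instead of $[0,s]$. Conditional on $\mathcal{G}_t$, the default-causing triggers in $(s,t]$ form a $p$-thinned Poisson process with intensity $p_u\lambda_u$, so the probability of no such event in $(s,t]$ is $\exp\{-\int_s^t p_u\lambda_u\,du\}$. Combining with the first display gives the stated identity.

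The main obstacle will be rigorously carrying out the independence step in the middle paragraph: one must argue that knowledge of $\mathcal{H}_s$ and of the event $\{\tau>s\}\in\mathcal{I}_s$ adds nothing beyond $\mathcal{G}_t$ as far as the post-$s$ trigger-arrival and loss mechanism is concerned. The delicate point is that $\{\tau>s\}$ depends on the past losses $L^i$ (which are not in $\mathcal{H}_s$), so one has to invoke the mutual independence of $\{L^i\}_{\tau^i\le s}$, $\{L^i\}_{\tau^i>s}$, $\{\tilde N_t-\tilde N_s\}$ and $\{X_u\}_{u\le t}$ in order to legitimately drop the conditioning on $\mathcal{H}_s\vee\mathcal{I}_s$ before applying the thinning argument.
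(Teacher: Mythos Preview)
Your approach is correct and essentially matches the paper's: both factor out $1_{\{\tau>s\}}$, use the conditional-independence structure (Markov/independent-increments property of the Cox process together with the i.i.d.\ losses) to strip away $\mathcal{H}_s$ and $\mathcal{I}_s$, and then reduce to Proposition~1. The only cosmetic difference is that the paper, after invoking the atom property of $\{\tau>s\}$ in $\mathcal{I}_s$ and a Bayes-rule manipulation, writes the answer as the ratio $P(\tau>t\mid\mathcal{G}_t)/P(\tau>s\mid\mathcal{G}_t)$ and applies Proposition~1 to numerator and denominator separately, whereas you re-run the thinning argument directly on $(s,t]$.
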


\begin{lemma} The process
$$
1_{\{t \geq \tau\}} - \int_0^t p_u \lambda_u 1_{\{u < \tau\}} du \ , \quad t \geq 0 \ ,
$$ is an $( \{ {\cal F}_t \}_{t \geq 0}, {\cal P})$-martingale.
\end{lemma}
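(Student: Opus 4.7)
The plan is to verify the three defining properties of a martingale for $M_t := 1_{\{t \geq \tau\}} - \int_0^t p_u \lambda_u 1_{\{u < \tau\}} du$. Adaptedness to $\{\mathcal{F}_t\}$ is immediate since $\tau$ is an $\mathcal{F}_t$-stopping time and $p_u \lambda_u = p(X_u)\lambda(X_u)$ is $\mathcal{G}_u$-progressive. Integrability follows from $0 \leq p_u \leq 1$ and the boundedness hypothesis on $\{\lambda_t\}$, which bounds the compensator on any finite interval. The heart of the argument is therefore the identity $E[M_t - M_s \mid \mathcal{F}_s] = 0$ for $s<t$.

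I would split $M_t - M_s$ into the jump piece $1_{\{\tau \leq t\}} - 1_{\{\tau \leq s\}}$ and the integral piece $\int_s^t p_u \lambda_u 1_{\{u < \tau\}} du$, and show both have the same $\mathcal{F}_s$-conditional expectation. For the jump piece, observe $\mathcal{F}_s = \mathcal{G}_s \vee \mathcal{H}_s \vee \mathcal{I}_s \subset \mathcal{G}_t \vee \mathcal{H}_s \vee \mathcal{I}_s$, so by the tower property and Lemma~2,
$$
E[1_{\{\tau > t\}} \mid \mathcal{F}_s] = E\!\left[1_{\{\tau > s\}} \exp\!\left\{-\int_s^t p_u \lambda_u\, du\right\}\, \Big|\, \mathcal{F}_s\right] = 1_{\{\tau > s\}}\, E\!\left[\exp\!\left\{-\int_s^t p_u \lambda_u\, du\right\}\, \Big|\, \mathcal{F}_s\right].
$$
Subtracting from $1_{\{\tau > s\}}$ yields
$E[1_{\{\tau \leq t\}} - 1_{\{\tau \leq s\}} \mid \mathcal{F}_s] = 1_{\{\tau > s\}}\, E\!\left[1 - \exp\{-\int_s^t p_u\lambda_u du\} \mid \mathcal{F}_s\right].$

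For the integral piece, I would use Fubini's theorem to interchange $\int_s^t$ with the conditional expectation (justified by the uniform bound on $p_u\lambda_u$), and then at each $u \in [s,t]$ apply Lemma~2 with $u$ in the role of $t$ to get
$$
E[p_u\lambda_u 1_{\{u < \tau\}} \mid \mathcal{F}_s] = 1_{\{\tau > s\}}\, E\!\left[p_u\lambda_u \exp\!\left\{-\int_s^u p_v\lambda_v\, dv\right\}\, \Big|\, \mathcal{F}_s\right],
$$
using that $p_u\lambda_u$ is $\mathcal{G}_u \vee \mathcal{H}_s \vee \mathcal{I}_s$-measurable so it pulls out before applying Lemma~2. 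Recognizing the $u$-integrand as the derivative $-\frac{d}{du}\exp\{-\int_s^u p_v\lambda_v dv\}$, integration over $[s,t]$ gives exactly $1_{\{\tau>s\}}\,E[1-\exp\{-\int_s^t p_u\lambda_u du\}\mid \mathcal{F}_s]$, matching the jump piece.

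The main obstacle I expect is technical rather than conceptual: ensuring that $p_u\lambda_u$ is indeed measurable with respect to the conditioning $\sigma$-field $\mathcal{G}_u \vee \mathcal{H}_s \vee \mathcal{I}_s$ needed to apply Lemma~2 at each $u$, and that Fubini's theorem and the pathwise antiderivative identity can be invoked without issue. Both are resolved by the $\mathcal{G}_u$-measurability of $p_u\lambda_u = p(X_u)\lambda(X_u)$ (since $\mathcal{G}_u \subset \mathcal{G}_u \vee \mathcal{H}_s \vee \mathcal{I}_s$) and by the boundedness assumption on $\{\lambda_t\}$, which makes all integrands dominated on compact time intervals.
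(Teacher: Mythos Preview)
Your argument is correct and follows the same overall structure as the paper: split $M_t-M_s$ into the jump increment and the compensator increment, handle the jump piece via the preceding lemma (what the paper calls Lemma~1; your reference to ``Lemma~2'' should point to that result, since Lemma~2 is the statement you are proving), and then show the compensator piece has the same conditional expectation.

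The one genuine difference is in how the integral piece is evaluated. You interchange $\int_s^t$ with $E[\,\cdot\mid\mathcal{F}_s]$ via Fubini and apply Lemma~1 pointwise at each $u\in[s,t]$, then recognize $p_u\lambda_u\exp\{-\int_s^u p_v\lambda_v\,dv\}$ as a $u$-derivative. The paper instead conditions on $\{\tau>s\}$ and $\mathcal{G}_\infty$, writes down the conditional density of $\tau$ from Proposition~1, and integrates $\int_s^{t\wedge\tau}p_u\lambda_u\,du$ against that density (splitting into $\tau\in(s,t)$ and $\tau>t$). Both routes land on $1_{\{\tau>s\}}\,E\bigl[1-\exp\{-\int_s^t p_u\lambda_u\,du\}\mid\mathcal{F}_s\bigr]$. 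Your Fubini-based route is a bit cleaner and keeps Lemma~1 as the single analytic input; the paper's density computation is more hands-on but makes the role of the hazard rate $p_u\lambda_u$ as the conditional default density explicit.
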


To price the defaultable securities in the proposed model as above,
we construct three ``building blocks'' as in Lando(1988).
Before that, we assume that all of the expectations
in this paper are taken under an equivalent martingale measure.
We suppose that $T$ denotes the expiry date of all contingent claims.
The three  ``building blocks'' are as follows:
\begin{itemize}
\item[(I)] $X1_{\{ \tau > T\}}$: A payment $X \in \mathcal{G}_T$ at a fixed date $T$ which occurs if there has been no default before time $T$.
\item[(II)] $Y_s 1_{\{\tau >s\}}$: A stream of payments at a rate specified by the $\{ \mathcal{G}_t \}_{t \geq 0}$-adapted process $Y$ which stops when default occurs.
\item[(III)] $Z_{\tau}$: A recovery payment at the time of default, where $Z$ is a $\{\mathcal{G}_t\}_{t \geq 0}$-adapted process.
\end{itemize}
Now we proceed to give the pricing formula of these three ``building blocks''.

\begin{proposition}
Suppose
$$
\begin{array}{ll}
\exp\{-\int_0^T r_s ds\}X,\\
\int_0^T Y_s \exp\{-\int_0^s r_u du\} ds,  \\
\int_0^T Z_s\lambda_s p_s \exp\{-\int_0^s (r_u+\lambda_up_u) du\} ds
\end{array}
$$
are integrable random variables. Then,
\begin{equation}
\begin{array}{lll}
E(\exp\{-\int_0^T r_s ds\}X 1_{\{\tau > T\}})&=&E(\exp\{-\int_0^T(r_s +p_s\lambda_s)ds\}X),
\end{array}
\end{equation}

\begin{equation}
\begin{array}{lll}
 E(\int_0^T Y_s 1_{\{\tau >s\}}\exp\{-\int_0^s r_u du\} ds)
&=&E(\int_0^T Y_s \exp\{-\int_0^s (r_u+p_u\lambda_u) du\} ds)
\end{array}
\end{equation}
and
\begin{equation}
\begin{array}{lll}
E(\exp\{-\int_0^{\tau} r_s ds\}Z_{\tau})
&=&E(\int_0^T Z_s\lambda_s p_s \exp\{-\int_0^s (r_u+\lambda_up_u) du\} ds).
\end{array}
\end{equation}

\end{proposition}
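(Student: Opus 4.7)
The plan is to dispatch the three identities in order: (1) and (2) are direct applications of Proposition 1 via conditioning, while (3) is the only one that genuinely uses the martingale structure of Lemma 2, and it moreover feeds on (2).

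For (1), I would condition on the full economic information $\mathcal{G}_T$. As in the Lando setup I treat the short rate $\{r_s\}$ as $\mathcal{G}$-adapted, so $X\exp\{-\int_0^T r_s ds\}$ is $\mathcal{G}_T$-measurable and pulls out of the inner conditional expectation:
$$
E\Bigl(e^{-\int_0^T r_s ds} X 1_{\{\tau > T\}}\Bigr) = E\Bigl(e^{-\int_0^T r_s ds} X \, P(\tau > T \mid \mathcal{G}_T)\Bigr).
$$
Proposition 1 taken with $t = s = T$ substitutes $P(\tau > T \mid \mathcal{G}_T) = \exp\{-\int_0^T p_u \lambda_u du\}$, which is exactly (1). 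For (2), I would apply Fubini--Tonelli (legitimate by the stated integrability) to move the expectation inside the $ds$-integral, apply the same conditioning argument with $\mathcal{G}_s$ replacing $\mathcal{G}_T$ to each integrand so that
$$
E\Bigl(Y_s 1_{\{\tau > s\}} e^{-\int_0^s r_u du}\Bigr) = E\Bigl(Y_s e^{-\int_0^s r_u du} \, P(\tau > s \mid \mathcal{G}_s)\Bigr) = E\Bigl(Y_s e^{-\int_0^s (r_u + p_u \lambda_u) du}\Bigr),
$$
and then re-apply Fubini to recover (2).

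For (3), the key is to rewrite the lump-sum recovery payoff as a Stieltjes integral against the default indicator:
$$
e^{-\int_0^\tau r_s ds} Z_\tau 1_{\{\tau \leq T\}} = \int_0^T e^{-\int_0^s r_u du} Z_s \, d 1_{\{s \geq \tau\}}.
$$
By Lemma 2, $1_{\{s \geq \tau\}}$ admits the $\{\mathcal{F}_t\}$-Doob--Meyer decomposition $dM_s + p_s \lambda_s 1_{\{s < \tau\}} ds$ with $M$ an $\mathcal{F}$-martingale. Under the integrability hypothesis the $dM$-stochastic integral is a true mean-zero martingale, leaving
$$
E\Bigl(e^{-\int_0^\tau r_s ds} Z_\tau 1_{\{\tau \leq T\}}\Bigr) = E\Bigl(\int_0^T e^{-\int_0^s r_u du} Z_s p_s \lambda_s 1_{\{s < \tau\}} ds\Bigr).
$$
At this point I would invoke identity (2) with $Y_s := Z_s p_s \lambda_s$ (which is $\mathcal{G}$-adapted), absorbing the indicator $1_{\{s < \tau\}}$ into the exponential factor and producing (3).

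The main obstacle is justifying the mean-zero claim for the $dM$-integral in (3): this requires verifying $\mathcal{F}$-predictability of $e^{-\int_0^s r_u du} Z_s$ (routine for $\mathcal{G}$-adapted processes with c\`adl\`ag paths, via a left-continuous version) together with an $L^1$ bound on $\int_0^T |e^{-\int_0^s r_u du} Z_s| p_s \lambda_s 1_{\{s < \tau\}} ds$, which is exactly the integrability hypothesis stated in the proposition together with boundedness of $\lambda$ and $p \in [0,1]$. Once this technicality is settled, (1) and (2) are essentially bookkeeping around Proposition 1 and (3) follows by combining the compensator formula with (2).
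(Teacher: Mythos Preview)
Your argument is correct. For (1) and (2) your conditioning-on-$\mathcal{G}$ approach is exactly what the paper intends when it says the results ``follow directly from Proposition~1.''

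For (3), however, you take a genuinely different route. The paper's implied argument is again pure conditioning: given $\mathcal{G}_T$, Proposition~1 yields the conditional density $f_{\tau\mid\mathcal{G}_T}(s)=p_s\lambda_s\exp\{-\int_0^s p_u\lambda_u\,du\}$ on $[0,T]$, so
\[
E\!\left(e^{-\int_0^{\tau} r_u\,du} Z_{\tau}\,1_{\{\tau\le T\}}\,\Big|\,\mathcal{G}_T\right)
=\int_0^T Z_s\,p_s\lambda_s\,e^{-\int_0^s (r_u+p_u\lambda_u)\,du}\,ds,
\]
and taking expectations gives (3) in one step. Your approach instead passes through the Doob--Meyer decomposition of Lemma~2, integrates $e^{-\int_0^s r_u\,du}Z_s$ against the compensated jump, and then feeds the compensator term back into (2). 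Both are valid; the paper's direct density argument is shorter here because the full conditional law of $\tau$ is already in hand, whereas your compensator route is the one that generalizes cleanly when only the intensity (not the explicit survival function) is available, at the cost of the predictability and true-martingale checks you correctly flag.
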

\begin{proof}
The results follow directly from Proposition 1.
\end{proof}

\section{An Extension to Dependent Multiple defaults}

In this section, we shall extend the basic model in the last section
to the multiple-default case.
Again we consider the filtered probability space
$(\Omega, {\cal F}, \{ {\cal F}_t \}_{t \geq 0}, P)$,
where the filtration satisfies the usual conditions and is specified as in the last section. Here the underlying state process of economic environment $\{ X_t \}_{t \geq 0}$ is also defined as in the last section, (i.e., a c\`adl\`ag process).
For each $i = 1, 2, \cdots, n$, let $\tau_i$ be a stopping time representing the default time of the $i^{th}$ individual obliger.

Let $\{ {N}^i_t \}_{t \geq 0}$, $i =1, 2, \cdots, n$, be $n$ independent standard Poisson processes with $N^i_0=0$,
$P$-.a.s. Write, for each $i = 1, 2, \cdots, n$, $\{ \lambda^i_t \}_{t \geq 0}$ for a bounded, non-negative stochastic
process which is adapted to the enlarged filtration containing the
filtration $\{ {\cal G}_t \}_{t \geq 0}$ and
the filtration generated by the sequence of stopping times $\{ \tau_i \}_{i = 1, 2, \cdots, n}$, where the
latter filtration will be defined precisely later in this section.

For each $i = 1, 2, \cdots, n$ and each $t \geq 0$, we define the intensity measure $\Lambda^i := \{ \Lambda^i_t \}_{t \geq 0}$
for the $i^{th}$-individual obliger by putting:
$$
\Lambda^i_t := \int_0^t \lambda^i_s ds < \infty \ . 
$$

Then for each $i = 1, 2,, \cdots, n$, a Cox (point) process $\{ {\tilde N}^i_t \}_{t \geq 0}$ for the
$i^{th}$ obligor associated with the intensity measure $\Lambda_i$  is defined by:
\begin{eqnarray*}
\tilde{N}^i_t := N^i_{\Lambda^i_t } \  .
\end{eqnarray*}
Define the arrival time of $j^{th}$ ``trigger'' event of name $i$,
$$
\tau^{j}_i=\inf\{t: \tilde{N}^i_t \geq j\}, i=1, 2, \ldots, n.
$$
Consider an array of i.i.d. arbitrary random variables
$$
\{L^j_i, i=1,2, \ldots,n, j=1,2,\ldots\}.
$$
We suppose that this array of 
random variables is independent of all previous events.
Define the random variables:
$$
K_i := \min\{ j : L^{j}_i > C^j_{\tau^j_i}\} \ , \quad i = 1, 2, \cdots, n,
$$
where $C^i_s := C^i(X_s)$ is the threshold value of name $i$ at time $s$ and $C^i$ a non-negative continuous function for $i=1,2, \ldots,n$.

Then the default time of  name $i$ is defined as
$$
\tau_i :=\tau^{K_i}_i.
$$
The information structure of the multiple-default model is specified as follows:
$$
\mathcal{H}_t:= \sigma \{ (\tilde{N}^i_s)_{i=1}^n: 0\leq s \leq t \} \vee {\cal N} \ ,
$$
and
$$
\mathcal{I}_t
:= \sigma \{ (1_{\{\tau_i \leq s\}})_{i=1}^n : 0 \leq s\leq t \} \vee {\cal N} \ .
$$
Again we assume that
$$
\mathcal{F}_t=\mathcal{G}_t \vee \mathcal{H}_t \vee \mathcal{I}_t.
$$
We also define 
$$
p^i_s :=P(L > C^i(X_s)\mid X_s), \ \ q^i_s := P(L \leq C^i(X_s)\mid X_s).
$$
We further assume that for each $i = 1, 2, \cdots, n$, $P(0<\tau_i<\infty)=1$ and $P(\tau_i=\tau_j)=0$ for any $i \neq j$.
Moreover, $X$ is an ``exogenous'' stochastic process in the sense that $\mathcal{G}_\infty$ and $\mathcal{H}_t\vee \mathcal{I}_t$ are conditionally independent
given $\mathcal{G}_t$.

We remark that this framework for multi-name defaults allows the default intensities to be sensitive to the observed defaults as well as the underlying state process $X$.
Taking the consideration of the real practice that once a firm recovers from a ``trigger'' event, the impact of such a ``trigger'' is restricted to a very minor level,
$\{ \lambda_t \}_{t \geq 0}$ is supposed $\{ \mathcal{G}_t \vee \mathcal{I}_t\}_{t \geq 0}$-adapted.

For each $i = 1, 2, \cdots, n$, if we treat information about the state process $X$ and observed defaults of other names up to time $t$ as a new ``$\mathcal{G}_t$'',
then applying Lemma 2 in the single-name default case,
we deduce that the process defined by:
$$
1_{\{t \geq \tau_i\}}-\int_0^t p^i_u \lambda^i_u 1_{\{u < \tau_i\}}du \ , \quad t \geq 0 \ ,
$$
is an $(\{ {\cal F}_t \}_{t \geq 0}, P)$-martingale.

Consequently, the total hazard construction method, Yu(2007) and Zhang \& Jiang (2009),
the order default rate approach by Gu et al.(2011) can be applied to compute the
multi-name default time distribution under this framework.
We also remark that this framework is indeed a generalization of the standard reduced-form intensity-based model by Lando (1988) and Yu (2007).
If we set $p^ i_s \equiv 1$, i.e., the firm cannot recover from ``trigger'' event $P$-a.s.,
then our proposed model is reduced to be the standard reduced-form intensity-based model.

\section{State Process as an Observable Markov Chain}

Suppose now that the state process $(X_t)_{t \geq 0}$ follows a continuous-time, homogeneous, $M$-state Markov chain
on $(\Omega, {\cal F}, P)$ with state space $\{x_1, x_2, \ldots,x_M\}$. The amount of time the chain $X$
spends in state $x_i$ before making a transition into another state is exponentially distributed with rate $v_i$ .
When the process leaves state $x_i$, it next enters state $x_j$ with probability $p_{ij}$,
where
$$
p_{ii}=0 \quad  {\rm and} \quad \sum_j p_{ij}=1
$$
for all $i$.

Let $T_{ij}(t)$ be the occupation time of the chain $X$ in state $x_j$ in the time interval
$[0, t]$ starting from $X_0 =x_i$.  We wish to determine the joint distribution of $(T_{i1}(t), T_{i2}(t), \ldots, T_{iM}(t))$.
Note that the joint distribution of $(T_{i1}(t), T_{i2}(t), \ldots, T_{iM}(t))$ is completely determined by
its joint moment generating function. We shall derive the joint moment generating function in the
sequel.

For each $i = 1, 2, \cdots, M$, let
$$
T_i(t)=(T_{i1}(t), T_{i2}(t), \ldots, T_{iM}(t))^T
$$
and
$$
u=(u_1,u_2, \ldots, u_M)^T \in \mathcal{R}^M.
$$
The moment generating function of $T_i(t)$ is given by:
$$
\Psi_i(u,t)=E(\exp\{u^{T} T_i(t)\})
$$
Let $\xi_i$ denote the time of the first jump from state $x_i$ to another state, i.e., $\xi_i \sim \exp(v_i)$.
Hence
$$
\begin{array}{lll}
\Psi_i(u,t)&=& \displaystyle E(\exp\{u^{T} T_i(t)\})\\
&=&\displaystyle E(E(\exp\{u^{T} T_i(t)\} \mid \xi_i))\\
&=&\displaystyle \sum_{k \neq i} p_{ik} v_i\int_0^t e^{(u_i-v_i)s} E(\exp\{u^T T_k(t-s)\}) ds + e^{(u_i-v_i)t}\\
&=&\displaystyle \sum_{k \neq i} p_{ik} v_i\int_0^t e^{(u_i-v_i)s} \Psi_k(u, t-s) ds + e^{(u_i-v_i)t}\\
&=&\displaystyle \sum_{k \neq i} p_{ik} v_i\int_0^t e^{(u_i-v_i)(t-s)} \Psi_k(u, s) ds + e^{(u_i-v_i)t}.
\end{array}
$$
Taking the partial derivative with respect to $t$ on both sides yields,
\begin{equation}\label{4p}
\frac{\partial}{\partial t}\Psi_i(u,t)= \sum_{k \neq i} p_{ik} v_i \Psi_k(u, t) +(u_i-v_i) \Psi_i(u,t).
\end{equation}
For simplicity,
we write
$$
\Psi_u(t)=(\Psi_1(u,t), \Psi_2(u,t), \ldots, \Psi_M(u,t))^T
$$
and define the matrix
$$
A=\left[
\begin{array}{cccc}
u_1-v_1   & p_{12}v_1 & \cdots &p_{1M} v_1\\
p_{21}v_2 & u_2-v_2   & \cdots &p_{2M} v_2\\
\vdots    &  \vdots   & \ddots &\vdots\\
p_{M1}v_M & p_{M2}v_M & \cdots &u_M-v_M\\
\end{array}
\right]
$$
Hence Equation (\ref{4p}) can be rewritten as a system of homogeneous linear differential equations with constant coefficients
\begin{equation}\label{4pp}
{\Psi_u}^{\prime}(t) = A \Psi_u(t),
\end{equation}
where $\Psi_u(0)={\bf 1}$. Solving the ODEs,  by using the Fundamental Theorem for Linear Systems (see Chapter 1 in L. Perko (2001)),  yields the following
proposition.

\begin{proposition}
The moment generating function of $T_i(t)$ is given by $\Psi_i(u,t)$, where $$
\Psi_u(t)=(\Psi_1(u,t), \Psi_2(u,t), \ldots, \Psi_M(u,t))^T
$$
has a unique solution as
$$
\Psi_u(t)= e^{At}{\bf 1}.
$$
\end{proposition}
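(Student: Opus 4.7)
The plan is to leverage the ODE derivation that has already been carried out in the excerpt: the conditioning on the first jump time $\xi_i$ yields the integral equation, and differentiating with respect to $t$ gives the linear system $\Psi_u'(t)=A\Psi_u(t)$ displayed in equation (\ref{4pp}). What remains is to (i) pin down the initial condition, (ii) invoke a uniqueness/existence theorem for linear ODEs, and (iii) write the resulting solution in closed form.

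First I would verify the initial condition. Since $T_i(0)=(0,0,\ldots,0)^T$ deterministically, each component $\Psi_i(u,0)=E(\exp(u^T T_i(0)))=1$, so $\Psi_u(0)=\mathbf{1}$. I would also note that the left-hand derivative computation leading to (\ref{4pp}) is justified because the integrand $e^{(u_i-v_i)(t-s)}\Psi_k(u,s)$ is continuous in $s$ and $t$, so Leibniz's rule for differentiation under the integral applies; continuity of $\Psi_k(u,\cdot)$ follows from dominated convergence together with the boundedness of the occupation times $T_{kj}(t)\le t$.

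Next I would apply the Fundamental Theorem for Linear Systems (cited from Perko, Chapter~1). The matrix $A$ is a constant $M\times M$ matrix, so the initial-value problem
\begin{equation*}
\Psi_u'(t)=A\Psi_u(t),\qquad \Psi_u(0)=\mathbf{1},
\end{equation*}
admits the unique global solution $\Psi_u(t)=e^{At}\mathbf{1}$, where $e^{At}=\sum_{n=0}^{\infty}\tfrac{(At)^n}{n!}$ is the matrix exponential. Uniqueness of solutions to the IVP is what upgrades the ``$\Psi_u(t)$ satisfies the ODE'' statement into the identification $\Psi_u(t)=e^{At}\mathbf{1}$.

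There is no real obstacle: the analytical content is entirely contained in the derivation of the ODE system done prior to the proposition statement, and the rest is bookkeeping. The only point that deserves a brief comment is the justification of differentiating the integral equation in $t$, which as noted above is routine given the continuity and boundedness of the integrand. One might additionally remark that $\Psi_i(u,t)$ may take the value $+\infty$ if $u$ has large positive components (since the $T_{ij}(t)$ are bounded by $t$, this is not actually an issue for any fixed $u\in\mathbb{R}^M$), so the moment generating function is finite on all of $\mathbb{R}^M$ and the computation is rigorous throughout.
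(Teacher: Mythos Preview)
Your proposal is correct and follows essentially the same approach as the paper: the paper derives the ODE system $\Psi_u'(t)=A\Psi_u(t)$ with initial condition $\Psi_u(0)=\mathbf{1}$ in the text preceding the proposition, and then simply invokes the Fundamental Theorem for Linear Systems (Perko, Chapter~1) to obtain $\Psi_u(t)=e^{At}\mathbf{1}$. Your write-up is in fact more careful than the paper's, since you explicitly justify the differentiation step and the finiteness of the moment generating function, whereas the paper leaves these implicit.
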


\noindent
This result will be used to price basket credit defaut swaps under our proposed model in
Section 5.2.

\section{Applications}
In this section, we apply the ordered default rate approach
in  Gu et al. (2011) to price different defaultable securities
under various assumptions of default correlation structures.

\subsection{Defaultable Bonds}

We first discuss the pricing of defaultable zero-coupon bonds with zero recovery under a constant default-free rate $r$
in the case of two firms.  The defaultable bond price is then proportional to the conditional survival probability $P(\tau_i>T \mid \mathcal{F}_t)$.
A ``looping default'' case with two firms was presented by Jarrow and Yu (2001), where they assumed that the default times of the two
firms  $\tau^A$ and $\tau^B$ have the following intensities:
$$
\lambda^A_t =a_1+a_2 1_{\{t \geq \tau^B\}}
$$
and
$$
\lambda^B_t =b_1+b_2 1_{\{t \geq \tau^A\}}.
$$
Yu (2007) gave the default time distribution of this two-firm case in the standard reduced-form intensity-based modelling framework.
In this section, we provide a solution of the two-firm case in our proposed modeling framework, where $p^i_t \equiv p$.
In other words, the firm recover from a ``trigger'' event with a constant probability $1-p$.

Let $\tau^A \wedge \tau^B$ denote the first default time of these two names, the the default rate of $\tau^A \wedge \tau^B$ is $a_1+b_1$.
By Proposition 1,
$$
P(\tau^A \wedge \tau^B >t)= \exp(-p(a_1+b_1)t).
$$
Hence
$$
\left\{
\begin{array}{ll}
P(\tau^A >t, \tau^A < \tau^B) =\displaystyle \frac{a_1}{a_1+b_1} e^{-p(a_1+b_1)t}\\
P(\tau^B >t, \tau^A > \tau^B) =\displaystyle \frac{b_1}{a_1+b_1} e^{-p(a_1+b_1)t}
\end{array}
\right.
$$
and also
$$
\left\{
\begin{array}{ll}
P(\tau^A-\tau^B>t \mid \tau^A> \tau^B) =e^{-p(a_1+a_2)t}\\
P(\tau^B-\tau^A>t \mid \tau^B> \tau^A) =e^{-p(b_1+b_2)t}.
\end{array}
\right.
$$
Then by making use of convolution, we have
$$
\left\{
\begin{array}{ll}
\displaystyle \frac{\partial}{\partial t}P(\tau^A \leq t, \tau^A>\tau^B)= \frac{b_1(a_1+a_2)p}{b_1-a_2}(e^{-p(a_1+a_2)t}-e^{-p(a_1+b_1)t})\\
\displaystyle \frac{\partial}{\partial t}P(\tau^B \leq t, \tau^B>\tau^A)= \frac{a_1(b_1+b_2)p}{a_1-b_2}(e^{-p(b_1+b_2)t}-e^{-p(a_1+b_1)t})
\end{array}
\right.
$$
Therefore, the marginal density of $\tau^A$ and $\tau^B$ are given by
$$
\left\{
\begin{array}{ll}
\displaystyle f_{\tau^A}(t,p)=\frac{b_1(a_1+a_2)p}{b_1-a_2}
\left(e^{-p(a_1+a_2)t}-e^{-p(a_1+b_1)t}\right)+a_1p e^{-p(a_1+b_1)t}\\
\displaystyle f_{\tau^B}(t,p)=\frac{a_1(b_1+b_2)p}{a_1-b_2}
\left(e^{-p(b_1+b_2)t}-e^{-p(a_1+b_1)t}\right)+b_1p e^{-p(a_1+b_1)t}.
\end{array}
\right.
$$
Comparing the results in Yu (2007),
the marginal density of $\tau^A$ and $\tau^B$ under
the standard reduced-form intensity-based model are, respectively, given by:
$$
g_{\tau^A}(t)=f_{\tau^A}(t,1) \quad {\rm and} \quad g_{\tau^B}(t)=f_{\tau^B}(t,1).
$$
Hence as the remark in Section 3, our proposed model is a generalization of the standard one.

\subsection{Basket Credit Default Swaps}

In this section we discuss the pricing of a basket CDS in the context of the multiple-default model described
in Section 3 and provide the sensitivity analysis of the CDS premium by varying key model parameters.
We consider a basket CDS contract that pays \$1 if $k$th-to-default out of a portfolio of reference entities
occurs prior to expiry date. To simplify our discussion, we assume that the payment (if any) occurs at expiration,
and that the buyer pays a premium at the initiation of the swap contract. With a constant risk-free interest rate $r$,
the premium on the $k$th-to-default CDS is given by:
$$
S_k=\exp\{-r T\}P(\tau^k \leq T) \ ,
$$
where $\tau^k$ is the $k$th-to-default time and T is the expiry date.

We assume the following default intensity for $n$ reference names:
$$
\lambda^i_t =X_t(1+b \sum_{j \neq i}1_{\{\tau_j \leq t\}}) \ , \quad i = 1, 2, \cdots, n \ ,
$$
and for all $i$, $p^i_t =1-e^{-cX_t}$.

Let $\tau^k$ denote the $k$th default time and $\tau^0$ is assigned to be 0.
Let
$$
\lambda^{(k)}_u = X_u (1+b(k-1))(n-(k-1))
$$
denote the $k$th-to-default rate as in Gu et al. (2011).
By Proposition 1, we have the following relations:
$$
 P\left(\tau^k-\tau^{k-1} >s \mid \sigma (\tau^{k-1}) \vee \mathcal{G_{\infty}}\right) = \exp\left\{-\int_{\tau^{k-1}}^{\tau^{k-1}+s} \lambda^{(k)}_u p_u du\right\}.
$$
Hence we have
$$
f_{\tau^k-\tau^{k-1} \mid \tau^{k-1}, \mathcal{G}_{\infty}}(s)
= \lambda^{(k)}_{\tau^{k-1}+s} p_{\tau^{k-1}+s}
\exp\left\{-\int_{\tau^{k-1}}^{\tau^{k-1}+s} \lambda^{(k)}_u p_u du\right\}.
$$
Then by convolution, given $\mathcal{G}_{\infty}$, we have the following recursive formula for
the conditional density function of $\tau^k$:
$$f_{\tau^k \mid \mathcal{G}_{\infty}}(t)
=\lambda^{(k)}_t p_t \int_0^t f_{\tau^{k-1} \mid \mathcal{G}_{\infty}}(u) \exp\left\{-\int_u^t\lambda^{(k)}_s p_s ds\right\} du.
$$
Let $Y_t =X_t p_t$ and by the recursive formula, assuming $b \neq 1/i$ for all $i=1,2,\ldots, n-1$, the PDF and CDF of $\tau^k$ respectively,
given the evolution of $(X_t)_{t \geq 0}$, are given by
$$
f_{\tau^k \mid \mathcal{G}_{\infty}}(t)
=\sum_{j=0}^{k-1}\alpha_{k,j}Y_t\exp\left\{-\beta_j\int_0^t Y_u du\right\}
$$
$$
P(\tau^k \leq t \mid \mathcal{G}_{\infty}) = \sum_{j=0}^{k-1}\frac{\alpha_{k,j}}{\beta_j}(1-\exp\{-\beta_j \int_0^t Y_u du\})
$$
where the coefficients are given by the following recursion:
$$
\left\{
\begin{array}{l}
\alpha_{k+1,j}= \left\{
\begin{array}{lll}
\displaystyle \frac{\alpha_{k,j} \beta_k}{\beta_k-\beta_j},& j=0,1,\ldots,k-1\\
\displaystyle -\sum_{u=0}^{k-1} \alpha_{k+1, u},& j=k
\end{array}
\right.\\
\displaystyle \beta_j = (n-j)(1+jb)
\end{array}
\right.
$$
where $\alpha_{1,0}=n$.
Hence,
$$
f_{\tau^k}(t) =\sum_{j=0}^{k-1}\alpha_{k,j} E (Y_t\exp\{-\beta_j\int_0^t Y_u du\})
$$
$$
P(\tau^k \leq t) = \sum_{j=0}^{k-1}\frac{\alpha_{k,j}}{\beta_j}(1-E(\exp\{-\beta_j \int_0^t Y_u du\})).
$$
If we assume $(X_t)_{t \geq 0}$ is a continuous time time-homogeneous Markov chain having $M$ states $\{x_1, x_2, \ldots, x_M\}$.
By using the results in Section 4, we find the swap premium $S_k$.
Indeed,  $Y_t=X_t p_t$ is also a continuous time time-homogeneous Markov chain having $M$ states $\{y_1, y_2, \ldots,y_M\}$, where $y_i=x_i(1-e^{-c X_i})$.
Then starting from state $x_i$, let $y=(y_1, y_2, \ldots, y_M)^T$, we have
$$
\begin{array}{ll}
P(\tau^k \leq t) &= \displaystyle \sum_{j=0}^{k-1}\frac{\alpha_{k,j}}{\beta_j}(1-E(\exp\{-\beta_j y^T T_i(t)\}))\\
& \displaystyle= \sum_{j=0}^{k-1}\frac{\alpha_{k,j}}{\beta_j}(1-\Psi_i(-\beta_j y, t)).
\end{array}
$$
Consequently we have
$$
S_k= e^{-rT}\sum_{j=0}^{k-1}\frac{\alpha_{k,j}}{\beta_j}(1-\Psi_i(-\beta_j y, T)).
$$
To give an example of the pricing of basket CDS, we set
the number of states of $(X_t)_{t \geq 0}$, $M=4$,
the states $x_i=0.1 i$, $p_{ij}=1/3$ for any $i \neq j$,
$v=(v_1,v_2,v_3,v_4)=(3,2,1,3)$.
We remark that the four states can represent the macroeconomic environment
as ``prosperous'', ``good'', ``neutral'', ``bad''.
We assume there are $10$ firms in the industry ($n=10$)
and we start from state $x_1$, ie, $X_0=x_1$,
the default free interest rate $r=0.05$, the expiry date $T=5$ years.
To examine the effect of contagion parameters $b$ and $c$,
we present the swap premium with the contagion parameters varying.
\begin{figure}
	\centering
		\includegraphics{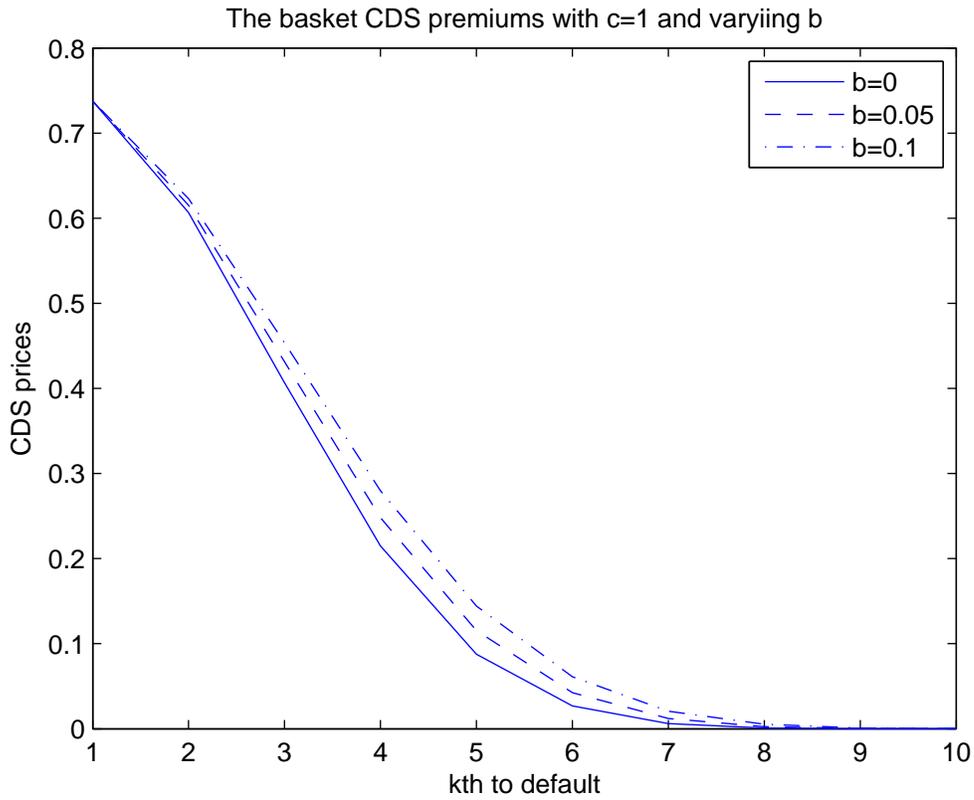} \\
		\includegraphics{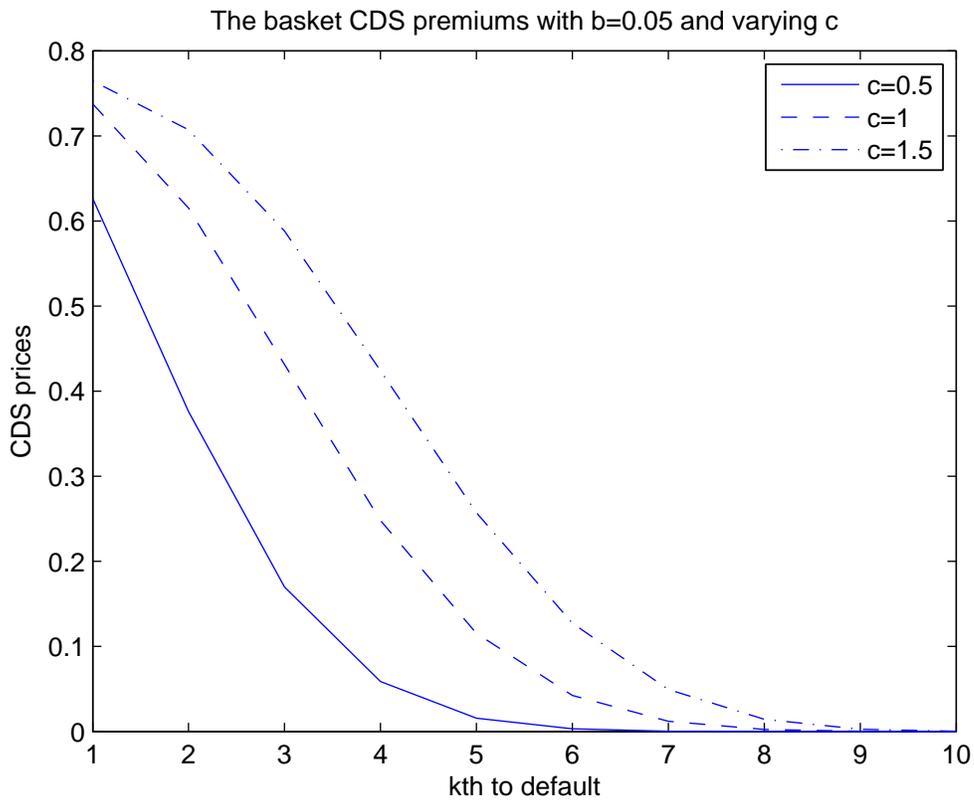}
\caption{The effect of the contagion parameters on the basket CDS premiums}
\end{figure}

As shown in Figure 1, an increase in the parameter $b$ indeed raises the premium for all $k$ as our intuition while the first-to-default swap price remains unchanged due to the contagion has no effect for the first-to-default time.
The premium increases as the parameter $c$ becomes bigger since the recovery from a ``trigger'' becomes more difficult in a weak macroeconomic environment.

\section{Concluding Remark}

In this paper, we introduce an extended reduced-form intensity-based model with ``trigger''
events, which captures an important feature in the market, namely the observable
events that trigger defaults.
Furthermore, we extend the model into a multi-name default model, with intensities driven by the history of the exogenous state process
representing the macroeconomic environment and the observed defaults.
A Markov chain model for the state process is also proposed to model the macroeconomic environment and the distribution of the occupation time of each states is deduced by solving a system of homogeneous ODEs.
We demonstrate the pricing of two defaultable contingent claims using the ordered default rate approach by Gu et al. (2011) with numerical examples.

There are still many outstanding issues for further research such as multi-state migration model that links loss sizes of a firm due to ``trigger'' events and its credit states which in turn affect payoffs to investors, and hybrid model of "trigger" events and default contagion with effective computation. 
We are currently investigating these problems.

\section{Appendix}

\subsection{Proof of Proposition 1}
\begin{proof}
Note that
$$
\begin{array}{lll}
P(\tau \leq s \mid \mathcal{G}_t)
&= & \displaystyle \sum_{i=1}^{\infty} P(\tau \leq s, K=i \mid \mathcal{G}_t)\\
&= & \displaystyle \sum_{i=1}^{\infty} E[ E[1_{\{\tau_i \leq s\}} 1_{\{K=i\}} \mid \sigma(\tau^1, \tau^2, \ldots, \tau^i) \vee \mathcal{G}_{\infty}]
\mid \mathcal{G}_t]\\
&= & \displaystyle \sum_{i=1}^{\infty} E[ 1_{\{\tau^i \leq s\}}p_{\tau^i} q_{\tau^{i-1}} q_{\tau^{i-2}} \ldots q_{\tau^1} \mid \mathcal{G}_t].
\end{array}
$$
By definition,
$$
P(\tau^{j+1}-\tau^{j}>u \mid \sigma(\tau^{j}) \vee \mathcal{G}_{\infty})= \exp\{-\int_{\tau^{j}}^{\tau^{j}+u} \lambda_v dv\}
$$
which implies that
$$
\frac{\partial }{\partial u}P(\tau^{j+1}-\tau^j >u \mid \sigma(\tau^j) \vee \mathcal{G}_{\infty})=  -
\lambda_{\tau^j+u}\exp\{-\int_{\tau^{j}}^{\tau^{j}+u} \lambda_v dv\} .
$$
Hence the joint density function of $\{\tau^1, \tau^2, \ldots, \tau^i\}$ given  ${\cal G}_{\infty}$ is:
$$
f_{\tau^1, \tau^2, \ldots, \tau^i \mid \mathcal{G}_{\infty}}(t_1, t_2, \ldots, t_i)=
\lambda_{t_1}\lambda_{t_2} \ldots \lambda_{t_i} \exp\{-\int_{0}^{{t_i}} \lambda_v dv\}
$$
Let $\{Z^j\}_{j=0}^{\infty}$ be a sequence of continuous stochastic processes
 $Z^j := \{ Z^j_u \}_{u \geq 0}$ such that
$$
Z^{j+1}_u :=\int_0^u Z^{j}_v \lambda_v q_v dv,
$$
where $Z^0_u=1$.
Thus
$$
\begin{array}{ll}
&\displaystyle  E[1_{\{\tau^i \leq s\}} p_{\tau^i} q_{\tau^{i-1}} q_{\tau^{i-2}} \ldots q_{\tau^1 } \mid \mathcal{G}_{\infty}]\\
= &\displaystyle  \int_0^s\int_0^{t_i}\int_0^{t_{i-1}}\ldots \int_0^{t_2} q_{t_1} q_{t_2} \ldots q_{t_{i-1}} p_{t_i} f_{\tau^1, \tau^2, \ldots, \tau^i \mid \mathcal{G}_{\infty}}(t_1, t_2, \ldots, t_i) d t_1 d t_2 \ldots d t_i\\
=& \displaystyle \int_0^s \lambda_u p_u \exp\{-\int_0^u \lambda_v dv\} Z^{i-1}_u du.
\end{array}
$$
Let $H_u = \sum_{i=1}^{\infty} Z^i_u$. Then by the monotone convergence theorem,
$$
H_u =\int_{0}^{u} (\sum_{i=1}^{\infty} Z^i_v +1)\lambda_v q_v dv = \int_0^u (H_v +1) \lambda_v q_v dv.
$$
Note that  a c\`adl\`ag process  has at most countably many discontinuities on a compact
interval.
Since $\{ \lambda_u \}_{u \geq 0}$ and $\{ p_u \}_{u \geq 0}$  are c\`adl\`ag processes, $H_u$ is continuous in $u \in \Re_+$ and differentiable except on a set with countable points
which has measure zero.
Consequently, except on an evanescent set,
$H_u$ satisfies the following Ordinary Differential Equation (ODE):
$$
\frac{d H_u}{du} = (H_u +1) \lambda_u q_u \ ,
$$
and $H_0 =0$.

Solving the ODE yields:
$$
H_u = \exp\left\{\int_0^u \lambda_v q_v dv\right\}-1.
$$
Therefore,
$$
\begin{array}{lll}
\displaystyle P(\tau \leq s \mid \mathcal{G}_t)
&=& \displaystyle  \sum_{i=1}^{\infty} E[E[1_{\{\tau^i \leq s\}} p_{\tau^i} q_{\tau^{i-1}} q_{\tau^{i-2}} \ldots q_{\tau^1 } \mid \mathcal{G}_{\infty}] \mid \mathcal{G}_t]\\
&=& \displaystyle \sum_{i=1}^{\infty} E[\int_0^s Z^{i-1}_u \lambda_u p_u \exp\{-\int_0^u \lambda_v dv\} du \mid \mathcal{G}_t]\\
&=& \displaystyle E[\int_0^s (1+H_u) \lambda_u p_u \exp\{-\int_0^u \lambda_v dv\} du \mid \mathcal{G}_t]\\
&=& \displaystyle E[\int_0^s \lambda_u p_u \exp\{-\int_0^u p_v \lambda_v dv\} du \mid \mathcal{G}_t]\\
&=&\displaystyle  1-\exp\{-\int_0^s p_u \lambda_u du\}.
\end{array}
$$
\end{proof} 

\subsection{Proof of Lemma 1}

\begin{proof}
Note that the conditional expectation is $0$ on the set $\{\tau \leq s\}$ and that the set $\{\tau > s\}$ is an atom in $\mathcal{I}_s$.
For any elements $\omega \in \mathcal{G}_t$ and $\tilde{\omega} \in \mathcal{H}_s$, due to the Markov property of the Poisson process,
$\tilde{\omega}$ and $\{\tau > t\}$ are independent conditional on $\omega$ and $\{\tau > s\}$. Consequently, using a version of the Bayes' rule and Proposition 1,
$$
\begin{array}{lll}
  E(1_{\{ \tau > t \}} \mid \mathcal{G}_t \vee \mathcal{H}_s \vee \mathcal{I}_s)
&=& 1_{\{ \tau > s \}}E(1_{\{ \tau > t \}} \mid \mathcal{G}_t \vee \mathcal{H}_s \vee \mathcal{I}_s)\\
&=& \displaystyle 1_{\{ \tau > s \}}\frac{P(\tau >t \mid \omega, \tilde{\omega})}{P(\tau >s \mid \omega, \tilde{\omega})}\\
&=& \displaystyle 1_{\{ \tau > s \}}\frac{P(\tau >t, \tilde{\omega} \mid \omega, \tau>s)}{P(\tilde{\omega} \mid \omega, \tau >s)}\\
&=& 1_{\{ \tau > s \}}P(\tau >t \mid \omega, \tau>s)\\
&=& \displaystyle 1_{\{ \tau > s \}}\frac{P(\tau >t \mid \omega)}{P(\tau >s \mid \omega)}\\
&=& \displaystyle 1_{\{ \tau > s \}}\frac{P(\tau >t \mid \mathcal{G}_t)}{P(\tau >s \mid \mathcal{G}_t)}\\
&=& \displaystyle 1_{\{ \tau > s \}}\frac{\exp\{-\int_0^t p_u \lambda_u du\}}{\exp\{-\int_0^s p_u \lambda_u du\}}\\
&=& 1_{\{ \tau > s \}}\exp\{-\int_s^t p_u \lambda_u du\}.
\end{array}
$$
\end{proof}

\subsection{Proof of Lemma 2}
\begin{proof}
For any $s< t$, using the results in Lemma 1,
$$
\begin{array}{lll}
  E[1_{\{t \geq \tau\}}-1_{\{s \geq \tau\}} \mid \mathcal{F}_s]
&=& E[1_{\{ \tau > s \}}-1_{\{ \tau > t \}} \mid \mathcal{F}_s]\\
&=& 1_{\{ \tau > s \}}(1-E[\exp\{-\int_s^t p_u \lambda_u du\}\mid \mathcal{F}_s]).
\end{array}
$$
Note that conditional on $\mathcal{G}_{\infty}$ and for $x >s$ the density of default time is given by
$$
\frac{\partial}{\partial x}P(\tau \leq x \mid \tau >s, \mathcal{G}_{\infty})=p_x\lambda_x \exp\{-\int_s^x p_u \lambda_u du\}.
$$
Hence,
$$
\begin{array}{lll}
  E[\int_s^t p_u \lambda_u 1_{\{u < \tau\}} du \mid \mathcal{F}_s]
&= &1_{\{ \tau > s \}} E[\int_s^t p_u \lambda_u 1_{\{u < \tau\}} du \mid \mathcal{F}_s] \\
&= &1_{\{ \tau > s \}} E[\int_s^t p_h \lambda_h \exp\{-\int_s^h p_u \lambda_u du\} (\int_s^h p_u \lambda_u du)  dh  \\
&& +\exp\{-\int_s^t p_u \lambda_u du\} (\int_s^t p_u \lambda_u du)  \mid \mathcal{F}_s] \\
&= &1_{\{ \tau > s \}}(1-E[\exp\{-\int_s^t p_u \lambda_u du\}\mid \mathcal{F}_s]).
\end{array}
$$
\end{proof}

\end{document}